\documentclass{article}
\usepackage{amsmath}
\usepackage{amssymb}
\usepackage{amsthm}
\usepackage{authblk}
\usepackage{graphicx}
\usepackage{natbib}
\usepackage{xcolor}
\usepackage[hidelinks]{hyperref}
\usepackage[a4paper,top=3cm,bottom=3.5cm,left=2.5cm,right=2.5cm]{geometry}

\title{Towards Optimal Prefix-Free Graph Construction: NP-Hardness and Structural Insights}
\author[1]{Andrej Baláž}
\author[2,3]{Alexandru Popa}
\affil[1]{Institute of Clinical and Translational Research, Biomedical Research Center of the Slovak Academy of Sciences}
\affil[2]{Faculty of Mathematics and Computer science University of Bucharest, Romania}
\affil[3]{National Institute for Research and Development in Informatics – ICI Bucharest}

\newtheorem{definition}{Definition}
\newtheorem{problem}{Problem}
\newtheorem{theorem}{Theorem}
\newtheorem{lemma}{Lemma}
\newtheorem{claim}{Claim}
\newtheorem{observation}{Observation}
\newtheorem{corollary}{Corollary}
\newtheorem{example}{Example}

\begin{document}
\maketitle

\begin{abstract}
Prefix-free parsing provides an efficient way to construct compressed representations of large and repetitive pangenomes and naturally induces a graph representation known as a prefix-free graph. In this work, we initiate a theoretical study of the problem of constructing prefix-free graphs of minimum size, where the size accounts for both the total length of distinct segment labels and the paths representing the input sequences.

We show that selecting an optimal set of trigger words is NP-hard, already when triggers consist of single characters. Using a synchronized-code reduction, we extend this hardness result to every fixed trigger length and further show that the problem remains NP-hard over an alphabet of size three. We then establish a structural connection between prefix-free graphs and de Bruijn graphs. In particular, we show that every compacted de Bruijn graph can be realized as a prefix-free graph and derive a hierarchy relating the sizes of minimum pangenomic graphs, minimum prefix-free graphs, compacted de Bruijn graphs, and de Bruijn graphs. Finally, we give an exact fixed-parameter algorithm running in $O(2^q n)$ time, where $q$ is the number of distinct candidate trigger words and $n$ is the total pangenome length.

Our results characterize both the computational limitations and the structural properties of optimizing prefix-free graph representations and provide a theoretical foundation for the design of compact graph representations of repetitive pangenomic data.
\end{abstract}

\section{Introduction}
Sequence alignment is one of the fundamental operations in bioinformatics. In a typical read-mapping workflow, sequencing reads are aligned to a reference genome so that differences between the reads and the reference can be identified and, ultimately, associated with phenotypes of interest. Traditionally, the reference is represented by a single finite string over the alphabet $\Sigma=\{A,C,G,T\}$. A single linear reference, however, can induce \emph{reference bias}: inferred sequences may be systematically pulled toward the chosen reference rather than reflecting the variation supported by the data~\citep{computational2018computational}. One way to mitigate this bias is to represent and analyze multiple related genomes jointly, that is, to use a pangenome.

The term \emph{pangenome} was introduced by Tettelin et al.~\citep{tettelin2005pangenome} in their study of variation in \emph{Streptococcus agalactiae}. The Computational Pan-Genomics Consortium defines a pangenome broadly as a set of genomic sequences intended to be analyzed jointly~\citep{computational2018computational}. In practice, pangenomic collections often contain genomes from closely related organisms and are therefore both large and highly repetitive. These two properties make the choice of representation algorithmically important.

A common approach is to represent a pangenome as a graph. Similar genomic regions are shared, while variation is represented by alternative walks through the graph. Variation graphs~\citep{garrison2018graphical}, for example, can encode a wide range of variation, from single-nucleotide substitutions to large structural rearrangements. Their flexibility also means that the same set of sequences can admit many non-isomorphic graph representations, with no canonical choice of the most useful one. Moreover, Equi et al.~\citep{equi2023graphs} showed, assuming the Strong Exponential Time Hypothesis, that general labeled graphs do not admit polynomial-size indexes supporting truly subquadratic pattern matching. This has motivated the study of more structured graph classes, including cactus graphs~\citep{2011paten1,2011paten2}, Wheeler graphs~\citep{gagie2017wheeler}, elastic-degenerate strings~\citep{bernardini2017pattern,iliopoulos2017efficient}, founder block graphs~\citep{makinen2020linear}, and $p$-sortable graphs~\citep{cotumaccio2021indexing}. Many such constructions, however, rely on pairwise or multiple-sequence alignment and can therefore be expensive on large pangenomes.

A complementary approach is to keep the pangenome as a collection of strings and exploit repetitiveness inside compressed string-processing algorithms. This approach builds on a mature toolkit that includes the Burrows--Wheeler transform, suffix arrays, suffix trees, and compressed variants of these structures. The central practical difficulty is to construct such data structures for very large repetitive collections using close-to-linear time and substantially less working space than the uncompressed input.

Prefix-free parsing (PFP), inspired by context-triggered piecewise hashing~\citep{kornblum2006identifying}, addresses this difficulty by partitioning strings at positions selected by a rolling hash. The partition produces a dictionary of distinct phrases together with a parse over phrase identifiers. This representation has been used to construct a variety of compressed data structures efficiently, including the Burrows--Wheeler transform, compressed suffix trees, Wheeler graphs, FM-indexes, and Lempel--Ziv factorizations~\citep{2019rpair,2019bigBWT,2021pfp_cst,2022pfp_wg,2023pfg,2023recursive_bigBWT,2023pfp_lz77,2023pfp_fm_index,2024recursive_pfp_pangenome,2024recursive_repair,luca2025measuring}.

In previous work~\citep{2023pfg}, we observed that the dictionary and parse produced by prefix-free parsing naturally induce a graph, which we called a \emph{prefix-free graph} (PFG). Prefix-free graphs can be constructed much faster than alignment-based pangenome graphs, support the construction of several compressed string indexes, and admit a precise combinatorial description that makes their structure amenable to theoretical analysis. They are also closely related to de Bruijn graphs: choosing all trigger words of a fixed length yields the corresponding node-centric de Bruijn representation, while removing triggers merges consecutive non-branching regions.

\subsection{Our results}
In this work, we initiate a theoretical study of the size of prefix-free graphs and of the computational complexity of constructing an optimal one. We measure the size of a pangenomic graph as the total length of its distinct segment labels plus the total number of vertices in the paths representing the input sequences. This objective captures a natural trade-off: longer segments shorten the paths, whereas shorter segments can expose more repeated substrings and thereby reduce the total dictionary size.

We formalize optimization problems for unrestricted pangenomic graphs, pangenomic graphs with consistent segmentation, prefix-free graphs, and prefix-free graphs whose trigger words must be selected from a prescribed candidate set. Our first main result is computational: selecting trigger words that minimize the size of the resulting prefix-free graph is NP-hard. We first prove NP-hardness when trigger words have length one, by a reduction from \textsc{1-in-3-SAT}. We then give a synchronized-code reduction showing that the problem remains NP-hard for every fixed trigger length $k\geq 1$. A second instantiation of the same reduction shows NP-hardness even when the target alphabet is restricted to three symbols. The restricted-trigger variant is therefore NP-hard as well.

Our second contribution concerns the relationship between prefix-free graphs and de Bruijn graphs. For a de Bruijn graph of order $k$, compaction preserves the total length of the \emph{segment bases}, obtained by removing the final $k-1$ symbols from each segment. We show that this quantity equals the number of distinct $k$-mers and lower-bounds the segment-base contribution of every prefix-free graph whose trigger words have length $k-1$. Conversely, the natural trigger set consisting of branching $(k-1)$-overlaps realizes the compacted de Bruijn graph as a prefix-free graph. Consequently, for every order $k$,
\[
    2\sqrt{n}
    \leq |MPG|
    \leq |MPFG|
    \leq |cDBG_k|
    \leq |DBG_k|,
\]
where $n$ is the total length of the pangenome. In addition, choosing no trigger words gives the general bound $|MPFG|\leq n+r$, where $r$ is the number of input strings, and $|DBG_k|\leq (k+1)n$. For a single input string, the former simplifies to $|MPFG|\leq n+1$.

Finally, if $q$ distinct $k$-mers can occur as trigger words, exhaustive enumeration yields an exact fixed-parameter algorithm running in $O(2^q n)$ time. Thus the optimization problem is intractable in general but remains tractable when the number of distinct candidate triggers is small, a regime that is particularly relevant for highly repetitive pangenomic data.

\section{Preliminaries and problem definition}
\label{sec:preliminaries}

A \emph{pangenome} is a finite collection of strings over an alphabet $\Sigma$. We allow repeated input strings; repeated occurrences contribute separately to path lengths, while segment labels are stored only once.

\begin{definition}[Pangenomic graph]
\label{def:pangenomic_graph}
A pangenomic graph is a directed graph $G=(V,E,P)$, where $V$ is a set of vertices, $E$ is a set of directed arcs, and $P$ is a collection of directed paths indexed by the input-string occurrences. Each vertex $v\in V$ is labeled by a string $\ell(v)\in\Sigma^*$, called a \emph{segment}. An arc $(u,v)\in E$ indicates that the segment $\ell(v)$ follows the segment $\ell(u)$ in at least one represented input string. For every input-string occurrence there is a path $p=(v_1,\ldots,v_t)\in P$ whose segment labels reconstruct that string under the representation-specific joining operation.
\end{definition}

Distinct vertices of a general pangenomic graph may carry the same segment label. For a path $p$, let $|p|$ denote its number of vertices.

\begin{definition}[Size of a pangenomic graph]
\label{def:size_pangenomic}
For a pangenomic graph $G=(V,E,P)$, define
\[
    |G|=\sum_{v\in V}|\ell(v)|+\sum_{p\in P}|p|.
\]
\end{definition}

A \emph{sentinel} is a character not contained in $\Sigma$; we write $\$$ for a sentinel. A string ending in $\$^k$ is called \emph{$k$-sentinelled}. Sentinels are useful in implementations to force a final parsing boundary, but the reductions below use the equivalent convention that trigger occurrences at the first or last possible position are ignored.

A \emph{$k$-mer} is a string of length $k$. A \emph{trigger word} is a $k$-mer selected as a parsing boundary. We first define the joining operation used to reconstruct a string from overlapping segments. If $x=x't$ and $y=ty'$ for the same $k$-mer $t$, define
\[
    x\diamond y=x'ty'.
\]
Thus $x\diamond y$ concatenates the two strings while retaining only one copy of their common trigger word.

\begin{definition}[Context-triggered partition]
\label{def:seg}
Let $s=s[1]\cdots s[m]$ be a string and let $W_k\subseteq\Sigma^k$ be a set of trigger words. Let
\[
    1<p_1<p_2<\cdots<p_t\leq m-k
\]
be exactly the starting positions of the trigger occurrences that are strictly internal to $s$, i.e., $s[p_j\ldots p_j+k-1]\in W_k$. If $t=0$, the partition consists only of $s$. Otherwise define
\begin{align*}
    s_1 &= s[1\ldots p_1+k-1],\\
    s_j &= s[p_{j-1}\ldots p_j+k-1] \qquad (2\leq j\leq t),\\
    s_{t+1} &= s[p_t\ldots m].
\end{align*}
Then
\[
    s=s_1\diamond s_2\diamond\cdots\diamond s_{t+1}.
\]
The context-triggered partition of a pangenome is the set of all \emph{distinct} segment strings produced in this way; it is not a multiset.
\end{definition}

\begin{definition}[Prefix-free graph]
\label{def:prefix-free}
Given a pangenome and a trigger set $W_k\subseteq\Sigma^k$, let $\mathcal S$ be the set of distinct segments produced by Definition~\ref{def:seg}. The induced prefix-free graph $G=(V,E,P)$ is defined as follows.
\begin{enumerate}
    \item There is exactly one vertex $v_s\in V$ for every segment $s\in\mathcal S$, labeled by $s$.
    \item If an input string has partition $s_1\diamond\cdots\diamond s_t$, then $(v_{s_i},v_{s_{i+1}})\in E$ for every $1\leq i<t$.
    \item The same input-string occurrence contributes the path $(v_{s_1},\ldots,v_{s_t})$ to $P$.
\end{enumerate}
\end{definition}

The size of a prefix-free graph is measured by Definition~\ref{def:size_pangenomic}.

We now define the optimization problems studied in this paper.

\begin{problem}[Minimum pangenomic graph (MPG)]
\label{prob:mpgc}
Given a pangenome, find a pangenomic graph of minimum size.
\end{problem}

\begin{problem}[Minimum pangenomic graph with consistent segmentation (MPGcs)]
\label{prob:mpgc-restricted}
Given a pangenome, find a minimum-size pangenomic graph subject to the following consistency condition: if a segment $x$ occurs in an input string, then every occurrence of $x$ in every input string must coincide with a segment occurrence in the corresponding path representation.
\end{problem}

\begin{example}
Consider the single sequence \texttt{TAGCTTAGTCGGATGCTCGATAG}. The segmentation
\[
\texttt{TAGCTTAG}\;\texttt{TCG}\;\texttt{GATGC}\;\texttt{TCG}\;\texttt{ATAG}
\]
is consistent. In contrast,
\[
\texttt{TAGCTTAG}\;\texttt{TCG}\;\texttt{GAT}\;\texttt{GCTCG}\;\texttt{ATAG}
\]
is not, because \texttt{TCG} is itself a segment but also occurs strictly inside the segment \texttt{GCTCG}.
\end{example}

\begin{problem}[Minimum prefix-free graph ($k$-MPFG)]
\label{prob:mpfgc}
Given a pangenome and a trigger length $k$, find a trigger set $W_k\subseteq\Sigma^k$ minimizing the size of the induced prefix-free graph.
\end{problem}

\begin{problem}[Minimum prefix-free graph with restricted trigger set ($k$-MPFGrt)]
\label{prob:mpfgc-restricted}
Given a pangenome, a trigger length $k$, and a candidate set $C_k\subseteq\Sigma^k$, find a subset $W_k\subseteq C_k$ minimizing the size of the induced prefix-free graph.
\end{problem}

\section{NP-hardness}
\label{sec:hardness}

We prove NP-hardness of $k$-MPFG in three steps. We first handle $k=1$, then lift the construction to every fixed $k$, and finally encode the alphabet over three symbols. It is enough to consider the decision version asking whether a trigger set of cost at most a given threshold exists.

\subsection{Trigger words of length one}

We reduce from monotone \textsc{1-in-3-SAT}~\citep{1983_theory_of_np}. The input is a 3-CNF formula in which every literal is non-negated, and the question is whether there exists a truth assignment making exactly one variable in every clause true. By duplicating clauses if necessary, we may assume without loss of generality that the formula has $m\geq 5$ clauses.

Let the variables be $x_1,\ldots,x_n$ and the clauses be $C_1,\ldots,C_m$. The alphabet contains one symbol $v_i$ for each variable and three fresh symbols $d_i^1,d_i^2,d_i^3$ for each clause. Set
\[
    \ell_1=\left\lceil\frac{m^2}{5}\right\rceil,
    \qquad
    \ell_2=m^2,
    \qquad
    D_i^j=(d_i^j)^{\ell_1}.
\]

Fix a clause $C_i=x_{i_1}\vee x_{i_2}\vee x_{i_3}$. For each $j\in\{1,2,3\}$ define a cyclic ordering
\[
(a_{i,j,1},a_{i,j,2},a_{i,j,3})=
\begin{cases}
(v_{i_1},v_{i_2},v_{i_3}), & j=1,\\
(v_{i_2},v_{i_3},v_{i_1}), & j=2,\\
(v_{i_3},v_{i_1},v_{i_2}), & j=3.
\end{cases}
\]
For every $j$, add $\ell_2$ copies of
\[
    r_i^j=D_i^j a_{i,j,1}D_i^j a_{i,j,2}D_i^j a_{i,j,3}D_i^j
\]
and the following six strings:
\begin{align*}
SP_i^j &= D_i^j a_{i,j,1}, &
SS_i^j &= a_{i,j,1}D_i^j a_{i,j,2}D_i^j a_{i,j,3}D_i^j,\\
MP_i^j &= D_i^j a_{i,j,1}D_i^j a_{i,j,2}, &
MS_i^j &= a_{i,j,2}D_i^j a_{i,j,3}D_i^j,\\
EP_i^j &= D_i^j a_{i,j,1}D_i^j a_{i,j,2}D_i^j a_{i,j,3}, &
ES_i^j &= a_{i,j,3}D_i^j.
\end{align*}
Under any trigger set that avoids the delimiter symbols, every segment produced inside group $(i,j)$ contains $d_i^j$. Since the delimiter symbols are unique to their groups, segment labels from different groups cannot coincide, and the group costs therefore add.

For one group, direct evaluation of Definition~\ref{def:seg} gives the costs in Table~\ref{tab:gadget-costs}. Here $a_1,a_2,a_3$ denote the three variables in that group's cyclic order.

\begin{table}[h]
\centering
\begin{tabular}{c|c}
selected variable triggers & group cost \\
\hline
$\varnothing$ & $16\ell_1+\ell_2+21$ \\
$\{a_1\}$ or $\{a_3\}$ & $10\ell_1+2\ell_2+20$ \\
$\{a_2\}$ & $8\ell_1+2\ell_2+18$ \\
$\{a_1,a_2\}$ or $\{a_2,a_3\}$ & $6\ell_1+3\ell_2+18$ \\
$\{a_1,a_3\}$ & $6\ell_1+3\ell_2+19$ \\
$\{a_1,a_2,a_3\}$ & $4\ell_1+4\ell_2+18$
\end{tabular}
\caption{Cost of a clause group for each subset of its three variable symbols. The two omitted cases follow by left--right symmetry.}
\label{tab:gadget-costs}
\end{table}

As a representative calculation, if only the middle symbol $a_2$ is a trigger, each copy of $r_i^j$ is split into two segments, contributing $2\ell_2$ path vertices. Across $r_i^j$ and the six auxiliary strings, the distinct segment labels have total length $8\ell_1+10$, while the six auxiliary paths contribute $8$ vertices. Hence the total is $8\ell_1+2\ell_2+18$, as in the third row of Table~\ref{tab:gadget-costs}.

Selecting a delimiter symbol $d_i^j$ is never optimal at the threshold used below. Indeed, the $\ell_2$ copies of $r_i^j$ alone then have path length at least $\ell_2(4\ell_1-1)$. Define
\[
    Q=6\ell_2+28\ell_1+58.
\]
For $m\geq5$,
\[
\ell_2(4\ell_1-1)-Q
=4\ell_1\ell_2-7\ell_2-28\ell_1-58>0,
\]
using $\ell_2=m^2$ and $m^2/5\leq\ell_1\leq m^2/5+1$. Thus a delimiter trigger already forces a single clause group above the target contribution $Q$.

\begin{theorem}
\label{thm:np-hard-k1}
$1$-MPFG is NP-hard.
\end{theorem}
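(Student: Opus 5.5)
The plan is a polynomial-time reduction from monotone \textsc{1-in-3-SAT} using the clause groups constructed above. The threshold is $K=mQ$. A trigger set $W_1$ that avoids all delimiter symbols is exactly a choice of a subset of the variable symbols, i.e.\ a truth assignment: $x_t$ is true iff $v_t\in W_1$. I will show that the instance has cost at most $K$ if and only if the formula has an assignment satisfying exactly one variable per clause.

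\emph{Size of the instance.} There are $3m$ groups. Each group has $\ell_2+6=m^2+6$ strings, and each string has length $O(\ell_1)=O(m^2)$. So the instance has polynomial size and is computable in polynomial time.

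\emph{Per-clause accounting.} I first argue additivity. Suppose no delimiter symbol of group $(i,j)$ is a trigger. In every string of the group, consecutive variable symbols are separated by a block $D_i^j$, and each string begins or ends with such a block. Hence every segment produced in the group contains $d_i^j$. Since $d_i^j$ is unique to its group, the labels of different groups never coincide. The cost of group $(i,j)$ therefore depends only on which of its own three variable symbols are triggers, and it is given by Table~\ref{tab:gadget-costs}.

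For a fixed clause $C_i$, the three cyclic orderings place each variable of $C_i$ once in position $1$, once in position $2$ and once in position $3$. Summing the table over $j=1,2,3$ gives the following clause costs:
\begin{itemize}
\item Exactly one variable true: the true variable is in positions $1,2,3$ across the three groups, so the cost is $2(10\ell_1+2\ell_2+20)+(8\ell_1+2\ell_2+18)=28\ell_1+6\ell_2+58=Q$.
\item No variable true: the cost is $48\ell_1+3\ell_2+63$, which exceeds $Q$ by $20\ell_1-3\ell_2+5\ge 4m^2-3m^2+5>0$.
\item Exactly two variables true: the pairs $\{1,2\},\{2,3\},\{1,3\}$ each occur once, so the cost is $18\ell_1+9\ell_2+55$. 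This exceeds $Q$ because $3\ell_2=3m^2>10\ell_1+3$ for $m\ge5$, using $\ell_1\le m^2/5+1$.
\item All three true: the cost is $12\ell_1+12\ell_2+54>Q$.
\end{itemize}
So every delimiter-free clause costs at least $Q$, with equality exactly when the clause is satisfied 1-in-3.

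\emph{Delimiter triggers.} Consider a clause in which some delimiter $d_i^j$ is selected. By the calculation preceding the theorem, group $(i,j)$ alone already costs more than $Q$, because the $\ell_2$ copies of $r_i^j$ have path length at least $\ell_2(4\ell_1-1)$. All costs are nonnegative, so this clause also costs more than $Q$. It remains to check that delimiters chosen in one group do not alter the costs of other groups. They do not, since the strings of a group contain only that group's delimiter symbol.

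\emph{Conclusion.} Every clause contributes at least $Q$ in every case, so every trigger set has cost at least $mQ$. The cost equals $mQ$ exactly when no delimiter is chosen and every clause has exactly one true variable. This gives both directions of the equivalence.

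The step I expect to be the main obstacle is the verification of Table~\ref{tab:gadget-costs}. It requires carefully enumerating the distinct segment labels of $r_i^j$ and the six auxiliary strings under each trigger subset. Particular care is needed with labels shared among the strings, such as $D_i^ja_{i,j,1}$, and with the convention that trigger occurrences at the first or last position are ignored. Once the table is trusted, the rest is the arithmetic above.
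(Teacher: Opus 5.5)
Your proposal is correct and follows the paper's proof essentially step for step. It uses the same reduction from monotone \textsc{1-in-3-SAT} with threshold $mQ$, relies on Table~\ref{tab:gadget-costs}, and compares the $0$-, $1$-, $2$- and $3$-variable cases of each clause against $Q$. The table entries check out under direct enumeration of Definition~\ref{def:seg}, assuming the three variables of each clause are distinct.

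Your framing of the converse as ``every clause costs at least $Q$, including any clause containing a delimiter trigger'' is slightly tidier than the paper's. The paper excludes delimiters using the single-group bound before it has shown that every other clause also costs at least $Q$.
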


\begin{proof}
We show that the monotone \textsc{1-in-3-SAT} instance is satisfiable if and only if the constructed pangenome admits a trigger set whose graph has size at most $mQ$.

Suppose first that the formula has a satisfying assignment with exactly one true variable in each clause. Select $v_i$ as a trigger exactly when $x_i$ is true. In every clause, the unique selected variable appears once in the first position of a cyclic group, once in the middle position, and once in the last position. Therefore the three groups of the clause contribute
\begin{align*}
&2(10\ell_1+2\ell_2+20)+(8\ell_1+2\ell_2+18)\\
&\qquad=6\ell_2+28\ell_1+58=Q.
\end{align*}
Summing over the $m$ clauses gives total size $mQ$.

Conversely, consider any trigger set of total size at most $mQ$. No delimiter symbol $d_i^j$ can be selected, by the preceding bound. We therefore interpret $x_i$ as true exactly when $v_i$ is selected. It remains to show that every clause has exactly one selected variable.

If no variable of a clause is selected, its three groups contribute
\[
    3(16\ell_1+\ell_2+21),
\]
which exceeds $Q$ because
\[
    3(16\ell_1+\ell_2+21)-Q
    =20\ell_1-3\ell_2+5>0.
\]
If exactly two variables are selected, the cyclic order produces two adjacent-pair cases and one endpoint-pair case, for a total of
\[
    18\ell_1+9\ell_2+55,
\]
which exceeds $Q$ because
\[
    (18\ell_1+9\ell_2+55)-Q
    =3\ell_2-10\ell_1-3>0
\]
for $m\geq5$. If all three variables are selected, the contribution is
\[
    12\ell_1+12\ell_2+54,
\]
and
\[
    (12\ell_1+12\ell_2+54)-Q
    =6\ell_2-16\ell_1-4>0.
\]
Thus every clause contributes at least $Q$, with equality only when exactly one of its variables is selected. Since the total cost is at most $mQ$, all clauses must attain equality. The selected variables therefore define a satisfying \textsc{1-in-3-SAT} assignment.
\end{proof}

\subsection{A synchronized-code reduction}

The next lemma lifts the preceding hardness result while preserving the objective up to an affine transformation.

\begin{lemma}[Synchronized-code reduction]
\label{lem:synchronized-code}
Let $\mathcal S$ be an instance of $1$-MPFG over alphabet $\Sigma$, and let $\varphi:\Sigma\to\Gamma^k$ be an injective fixed-length code with the synchronization property that, in every concatenation $\varphi(a)\varphi(b)$, a codeword $\varphi(c)$ occurs only at one of the two aligned codeword positions. Suppose moreover that there is a guard word $g\in\Gamma^k$ such that no codeword occurs in $g t g$ for any non-codeword $k$-mer $t$ occurring in a concatenation $\varphi(a)\varphi(b)$, and no string $g t g$ occurs in an encoded input string. Then one can construct in polynomial time a $k$-MPFG instance $\mathcal S'$ and a constant $C$ such that
\[
    \operatorname{OPT}(\mathcal S')=k\operatorname{OPT}(\mathcal S)+C.
\]
\end{lemma}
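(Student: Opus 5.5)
We encode every input string $s=s[1]\cdots s[m]$ of $\mathcal S$ as $\varphi(s)=\varphi(s[1])\cdots\varphi(s[m])$, i.e.\ we extend $\varphi$ to a morphism, and let $\mathcal S'$ consist of the encoded strings. We may also add a fixed family of guard strings built from $g$, for example $g\,t\,g$ for each non-codeword $k$-mer $t$ occurring in some $\varphi(a)\varphi(b)$. These extra strings pay for, or neutralize, any benefit of choosing non-codeword triggers; their cost is a quantity we can compute and will absorb into $C$. The proof then compares trigger sets $W_1\subseteq\Sigma$ for $\mathcal S$ with trigger sets $W_k\subseteq\Gamma^k$ for $\mathcal S'$ in two directions.

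\emph{Upper bound.} Given $W_1$, take $W_k=\varphi(W_1)$. By the synchronization property, a codeword $\varphi(c)$ occurs in $\varphi(s)$ only at aligned positions $(i-1)k+1$, and then $s[i]=c$ by injectivity. Hence the internal trigger occurrences in $\varphi(s)$ correspond bijectively to those of $W_1$ in $s$. The boundary convention is preserved, since position $1<p\le m-1$ in $s$ maps to $1<(p-1)k+1\le mk-k$.

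Each segment $s_j=s[p_{j-1}\ldots p_j]$ maps to the segment $\varphi(s_j)$ of length $k|s_j|$. Distinctness is preserved by injectivity of the morphism on equal-length strings. Paths have the same number of vertices. So the encoded strings cost exactly $k\cdot(\text{dictionary})+\text{paths}$. This is not yet $k$ times the old cost, so I would repair the scaling: either replicate each input string $k$ times in $\mathcal S'$ so that path costs also scale by $k$, or choose the guard strings so that the residual path term becomes a constant. I would settle this choice first, by replicating inputs $k$ times. The guard strings contain no codeword, so with codeword triggers they form single segments of fixed total cost $C$.

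\emph{Lower bound (main obstacle).} Given an arbitrary $W_k$, we must produce $W_1$ with $k\cdot\mathrm{cost}(W_1)+C\le \mathrm{cost}(W_k)$. Put $W_1=\varphi^{-1}(W_k)$. Then show that deleting every non-codeword trigger $t$ from $W_k$ never increases the cost. Inside encoded strings, such a $t$ occurs only at non-aligned positions within some $\varphi(a)\varphi(b)$. The guard hypothesis is what makes this work: $t$ occurs in the guard string $g t g$, and since $gtg$ contains no codeword and does not occur in encoded strings, choosing $t$ splits $gtg$ (replicated enough times) into extra path vertices and new unshared dictionary labels. I expect this charging argument to be the delicate step. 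It must show that the savings from merging dictionary labels via $t$ in encoded strings are at most the added path cost on the guard strings, possibly after scaling the guard multiplicity polynomially, in the spirit of the $\ell_2$ copies in Theorem~\ref{thm:np-hard-k1}.

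Once only codeword triggers remain, the upper-bound computation applies verbatim and gives equality. This yields $\operatorname{OPT}(\mathcal S')=k\operatorname{OPT}(\mathcal S)+C$. The construction is clearly polynomial.
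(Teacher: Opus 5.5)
Your construction and your upper-bound direction match the paper's. You encode each string by the morphism, take $k$ copies of each encoded string so that both the dictionary and the path term scale by $k$, and add the gadgets $g\,t\,g$ for the non-codeword $k$-mers $t$ of the concatenations $\varphi(a)\varphi(b)$. You also correctly check that internal trigger positions correspond under $p\mapsto(p-1)k+1$.

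The gap is the lower-bound direction, which is the real content of the lemma. You only state what the charging argument ``must show''. You never fix the multiplicity of the gadgets or prove that any multiplicity suffices. The local comparison you propose, between the dictionary savings that $t$ produces in encoded strings and the path cost on the gadgets, is also awkward to carry out directly, because a trigger's effect on the set of \emph{distinct} labels is non-local. In addition, your claim that the pieces of a split $g\,t\,g$ are ``new unshared'' labels does not follow from the hypotheses, which exclude only the whole word $g\,t\,g$ from encoded strings.

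The missing idea is a crude global bound on the dictionary that does not depend on the multiplicity.
\begin{itemize}
\item In a string of length $\ell$ there are fewer than $\ell$ internal boundaries, each duplicating exactly $k$ symbols. So its segments have total length at most $(k+1)\ell$.
\item Identical copies add no new labels. Hence under \emph{every} trigger set the dictionary has total length at most $B=(k+1)L$, where $L$ is the total length of one copy of each distinct encoded string and each gadget.
\item Because $B$ does not depend on the number of copies, you can take $M=B+1$ copies of every $G_t$ without circularity.
\end{itemize}
Now suppose an optimal trigger set contains a non-codeword $t$. Removing a trigger only merges segments, so no path gets longer. The central occurrence of $t$ in $g\,t\,g$ starts at position $k+1$, which lies in the internal range $1<p\le 2k$. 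It is therefore a boundary in each of the $M$ copies. Removing $t$ thus shortens the paths by at least $M$ and enlarges the dictionary by at most $B<M$, a strict improvement. The same argument handles non-codeword triggers that occur only inside gadgets. With this step supplied, your reduction to codeword-only trigger sets and your reuse of the upper-bound computation complete the proof.
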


\begin{proof}
Let $N$ be the total length of the strings in $\mathcal S$. For $s=a_1\cdots a_m$, write
\[
    \widehat{s}=\varphi(a_1)\cdots\varphi(a_m).
\]
The first part of $\mathcal S'$ contains $k$ copies of $\widehat{s}$ for every input-string occurrence $s\in\mathcal S$.

Let $R$ be the set of non-codeword $k$-mers that occur in some $\varphi(a)\varphi(b)$. Every non-codeword trigger that can affect an encoded input string belongs to $R$. For each $t\in R$, define the penalty gadget
\[
    G_t=g t g.
\]
Let $L$ be the total length of one copy of every distinct encoded source string together with one copy of every gadget $G_t$, and set
\[
    B=(k+1)L,
    \qquad
    M=B+1.
\]
Add $M$ copies of every $G_t$ to $\mathcal S'$. The construction is polynomial because $|R|\leq |\Sigma|^2(k-1)$.

Consider a source trigger set $W_1\subseteq\Sigma$ and the target trigger set
\[
    \widehat W=\{\varphi(a):a\in W_1\}.
\]
By synchronization, the partition of every encoded source string is exactly the encoded partition of the original string. If the source graph has segment-length contribution $X$ and path-length contribution $Y$, the encoded source strings contribute $kX+kY=k(X+Y)$: segment lengths are multiplied by $k$, and the $k$ copies multiply the path contribution by $k$.

By the guard assumption, no codeword trigger occurs in any $G_t$. Hence every penalty gadget remains a single segment and contributes a constant independent of $W_1$. The total target cost is therefore $k|G(W_1)|+C$ for a computable constant $C$.

It remains to exclude non-codeword triggers from an optimum. Suppose an optimal target trigger set contains $t\in R$. The central occurrence of $t$ is strictly internal in each of the $M$ copies of $G_t$. Removing $t$ therefore decreases the total path length by at least $M$. On the other hand, for any string $x$ of length $\ell$, a context-triggered partition has total segment length along its path at most $(k+1)\ell$: each internal boundary introduces an overlap of exactly $k$ symbols, and there are at most $\ell$ such boundaries. Consequently, the total length of the global dictionary is at most $B$. Removing one trigger can therefore increase the dictionary contribution by at most $B$. Since $M=B+1$, removing $t$ strictly decreases the graph size, contradicting optimality.

Thus every optimal target trigger is a codeword. Mapping codeword triggers back through $\varphi$ gives a source trigger set, and the affine cost relation follows in both directions.
\end{proof}

\begin{theorem}
\label{thm:np-hard-genk}
For every fixed $k\geq1$, $k$-MPFG is NP-hard.
\end{theorem}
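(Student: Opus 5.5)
The case $k=1$ is Theorem~\ref{thm:np-hard-k1}. For $k\geq2$ the plan is to compose that reduction with Lemma~\ref{lem:synchronized-code}. The hardness of $1$-MPFG is witnessed by the decision question ``is $\operatorname{OPT}(\mathcal S)\leq mQ$?''. The lemma gives $\operatorname{OPT}(\mathcal S')=k\operatorname{OPT}(\mathcal S)+C$ with $C$ computable, so this question is equivalent to asking whether $\operatorname{OPT}(\mathcal S')\leq kmQ+C$. The whole proof therefore reduces to exhibiting, in time polynomial in $|\Sigma|$ for fixed $k$, a target alphabet $\Gamma$, an injective code $\varphi:\Sigma\to\Gamma^k$ with the synchronization property, and a guard word $g$ meeting both guard conditions.

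Since the source alphabet of Theorem~\ref{thm:np-hard-k1} grows with the instance, I would let $\Gamma$ grow too. The simplest choice is a marker-based code. Take a fresh symbol $\#$ and, for each $a\in\Sigma$, a fresh symbol $\bar a$, and set $\varphi(a)=\#^{k-1}\bar a$. Every codeword then contains exactly one non-$\#$ symbol, placed in its last position. Consider a $k$-window of $\varphi(a)\varphi(b)$ at offset $1\leq o\leq k-1$. It contains $\bar a$ at position $k-o<k$, so it is either not of the form $\#^{k-1}x$, or (when $k-o\neq k$) it carries its non-$\#$ symbol in the wrong place. Hence it is not a codeword, and synchronization holds. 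Injectivity is clear.

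For the guard I would take a second fresh symbol and set $g=\flat^k$. Any $k$-window of $gtg$ that touches a $\flat$ contains a symbol occurring in no codeword, so it is not a codeword. The only window avoiding $\flat$ is $t$ itself, which is a non-codeword by the definition of $R$. Likewise $gtg$ cannot occur in an encoded input string, since $\flat$ never appears there. All of this is polynomial, since $|\Gamma|=|\Sigma|+2$ and $|R|\leq|\Sigma|^2(k-1)$.

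The step I expect to need the most care is checking the lemma's hidden assumption on the source side. The lemma maps optimal target triggers (all codewords) back through $\varphi$. I must also confirm that the affine relation is compatible with the internal-occurrence convention: a trigger occurring at the first or last codeword of $\widehat s$ must be treated exactly as a source trigger at the first or last character. In $\widehat s$, the first codeword starts at position $1$ and the last codeword starts at position $|\widehat s|-k+1$. These are precisely the excluded positions in Definition~\ref{def:seg}, so the correspondence holds.

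With the code and guard verified, the lemma applies. NP-hardness of $k$-MPFG then follows from Theorem~\ref{thm:np-hard-k1} for every fixed $k$. The paper's later three-letter claim would need a genuinely cleverer code, but that is not required here.
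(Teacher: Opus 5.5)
Your proof is correct and takes essentially the paper's route. Both compose Theorem~\ref{thm:np-hard-k1} with Lemma~\ref{lem:synchronized-code} using a marker-based fixed-length code and a fresh guard symbol: the paper uses $\varphi(a)=a(a')^{k-1}$ with $g=\#^k$, and your $\varphi(a)=\#^{k-1}\bar a$ with $g=\flat^k$ is a mirror-image variant that works equally well. Your explicit checks of the threshold $kmQ+C$ and of the first/last-position convention spell out details the paper leaves implicit.
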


\begin{proof}
The case $k=1$ is Theorem~\ref{thm:np-hard-k1}. Fix $k\geq2$. For every $a\in\Sigma$, introduce a fresh symbol $a'$ and define
\[
    \varphi(a)=a(a')^{k-1}.
\]
A codeword begins with an unprimed symbol and all remaining positions contain the corresponding primed symbol. Hence a codeword can occur in a concatenation of codewords only at an aligned position. Add one further fresh guard symbol $\#$ and choose $g=\#^k$. The hypotheses of Lemma~\ref{lem:synchronized-code} are immediate, so $1$-MPFG reduces to $k$-MPFG.
\end{proof}

\begin{theorem}
\label{thm:np-hard-alpha3}
MPFG is NP-hard even when the input alphabet has size three.
\end{theorem}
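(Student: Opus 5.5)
We reduce from $1$-MPFG, which is NP-hard by Theorem~\ref{thm:np-hard-k1}, using Lemma~\ref{lem:synchronized-code} with target alphabet $\Gamma=\{0,1,\#\}$. The source alphabet $\Sigma$ of the constructed instance has size $s=n+3m$. We fix a codeword length $k=\Theta(\log s)$, which is polynomially bounded, so $k$ copies of each encoded string and the gadgets have polynomial size. Because $k$ grows with the instance, the resulting target problem is $k$-MPFG with $k$ part of the input. That is exactly the statement ``MPFG over an alphabet of size three''.

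\textbf{The code.} Let $b=\lceil\log_2 s\rceil$ and let $\beta(a)\in\{0,1\}^b$ be distinct binary strings. We define
\[
\varphi(a)=\#\,1\,\beta'(a),
\]
where $\beta'(a)$ is $\beta(a)$ with a $0$ inserted after every bit, so that it never contains two consecutive $1$'s. The length is $k=2b+2$. Each codeword contains exactly one $\#$, at its first position. Hence in any concatenation $\varphi(a)\varphi(b)$, an occurrence of a codeword must place its unique $\#$ on a $\#$ of the concatenation, which is at offset $0$ or $k$. This gives the synchronization property, and injectivity is immediate from injectivity of $\beta$. We could even drop the interleaving, since the single $\#$ alone forces alignment; the interleaving only adds slack for the guard condition.

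\textbf{The guard.} We take $g=1^k$, which contains no $\#$. Let $t$ be a non-codeword $k$-mer occurring in some $\varphi(a)\varphi(b)$. Then $t$ is a length-$k$ window at a non-aligned offset, so it contains exactly one $\#$, at an interior position. Any codeword occurrence in $g t g$ must use the only $\#$ available, namely the one inside $t$. The codeword is then forced to be the window of $gtg$ starting at that $\#$. That window consists of a suffix of $t$ followed by a prefix of $g$, which is a nonempty block of $1$'s. Its final symbols are therefore $1$'s, whereas every codeword ends in the inserted $0$ of $\beta'$. So no codeword occurs, which is the first guard condition.

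For the second condition, encoded input strings contain a $\#$ in every window of length $k$, while $g=1^k$ contains none. Hence $gtg$ never occurs in an encoded input. All hypotheses of Lemma~\ref{lem:synchronized-code} therefore hold.

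\textbf{Conclusion and main obstacle.} Lemma~\ref{lem:synchronized-code} gives $\operatorname{OPT}(\mathcal S')=k\operatorname{OPT}(\mathcal S)+C$ with $C$ computable. The threshold $mQ$ therefore translates to $k\cdot mQ+C$, and the construction is polynomial because $k=O(\log s)$ and $|R|\le s^2(k-1)$. The main obstacle I expect is the guard verification: one must check carefully that no codeword straddles the boundary between $t$ and $g$, which is why every codeword is made to end in $0$. A second point is that the lemma's proof does not depend on $k$ being constant, only on polynomial size. I would re-read that proof to confirm that $B$ and $M$ stay polynomial when $k$ is logarithmic.
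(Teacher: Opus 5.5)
Your proposal is correct and follows essentially the same route as the paper: reduce from $1$-MPFG via Lemma~\ref{lem:synchronized-code}, using a synchronized code of length $O(\log|\Sigma|)$ over three letters, with $k$ growing with the instance. The only difference is the concrete code: the paper brackets the binary encoding with the separator on both ends, $\varphi(a)=2e(a)2$ with guard $0^k$, so any codeword inside $gtg$ would have to coincide with $t$, whereas your single leading $\#$, a forced trailing $0$ and guard $1^k$ work equally well (only the final $0$ of $\beta'(a)$ is needed, not the full interleaving or the extra $1$).
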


\begin{proof}
Reduce from the $1$-MPFG instances of Theorem~\ref{thm:np-hard-k1}. Let $\beta=\max\{1,\lceil\log_2|\Sigma|\rceil\}$ and fix an injective binary encoding $e:\Sigma\to\{0,1\}^{\beta}$. Set
\[
    k=\beta+2,
    \qquad
    \varphi(a)=2e(a)2
\]
over the alphabet $\Gamma=\{0,1,2\}$.

This code is synchronized. A codeword starts with $2$, is followed by a binary symbol, and ends with $2$. In a concatenation $\varphi(a)\varphi(b)$, an unaligned length-$k$ window either starts with $0$ or $1$, or starts at the final $2$ of $\varphi(a)$ and therefore has $2$ as its second symbol. In neither case can it equal a codeword.

Choose the guard word $g=0^k$. Since every codeword begins and ends with $2$, no codeword occurs in $g$. If $t$ is a non-codeword $k$-mer arising from two consecutive codewords, then no codeword can occur in $g t g$: any such occurrence would have to be entirely contained in the middle block $t$, forcing $t$ itself to be a codeword. Moreover, encoded input strings contain a symbol $2$ at least once every $k$ positions, so $0^k$ cannot occur in them; hence no penalty gadget $g t g$ can coincide with an encoded source segment. Lemma~\ref{lem:synchronized-code} now gives a polynomial reduction to an alphabet of size three.
\end{proof}

\begin{corollary}
\label{cor:restricted-hard}
For every fixed $k\geq1$, $k$-MPFGrt is NP-hard. It remains NP-hard when the alphabet is restricted to three symbols.
\end{corollary}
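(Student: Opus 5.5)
The plan is to reduce $k$-MPFG to $k$-MPFGrt by taking the trivial candidate set $C_k=\Sigma^k$, or, to keep the instance size polynomial, the set of all $k$-mers that actually occur in the input strings. In the second form, $C_k$ has size at most the total input length $N$, so it can be written down explicitly even though $|\Sigma|^k$ may be large. It is not immediately obvious how a $k$-MPFGrt instance encodes $C_k$, which is why I prefer the occurring-$k$-mer version.

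The first step is to check that restricting to occurring $k$-mers loses nothing. A $k$-mer that does not occur in any input string never produces a strictly internal trigger occurrence in Definition~\ref{def:seg}. Adding it to or removing it from $W_k$ therefore leaves every partition, and hence the induced prefix-free graph and its size, unchanged. Consequently, for every $W_k\subseteq\Sigma^k$, the set $W_k\cap C_k$ induces the same graph, and
\[
\min_{W_k\subseteq C_k}|G(W_k)|=\min_{W_k\subseteq\Sigma^k}|G(W_k)|.
\]
The two decision problems, with the same threshold, therefore have identical answers.

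The second step combines this identity-type reduction, which is computable in polynomial time by enumerating the windows of the input, with Theorem~\ref{thm:np-hard-genk}. This gives NP-hardness of $k$-MPFGrt for every fixed $k\geq1$. For the three-letter alphabet, I apply the same reduction to the instances of Theorem~\ref{thm:np-hard-alpha3}. Those instances are over $\Gamma=\{0,1,2\}$, with $k=\beta+2$ depending on the source instance. Since the restricted problem takes $k$ and $C_k$ as part of the input, the reduction again preserves the optimum and runs in polynomial time.

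I expect the only subtle point to be a formal one. In Theorem~\ref{thm:np-hard-alpha3}, $k$ is not fixed, so the alphabet-three claim must be read with $k$ as part of the input, exactly as in that theorem. Alternatively, one can pass $C_k$ as the set of codewords $\varphi(\Sigma)$, which by the proof of Lemma~\ref{lem:synchronized-code} already contains every optimal trigger set. Either candidate set works; I would use the set of occurring $k$-mers for uniformity.
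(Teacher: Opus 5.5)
Your proposal is correct and matches the paper's proof: you take the $k$-mers occurring in the input as the candidate set, note that non-occurring words never create strictly internal trigger occurrences so the optimum is unchanged, and then invoke Theorems~\ref{thm:np-hard-genk} and~\ref{thm:np-hard-alpha3}. Your remark that the three-letter case must be read with $k$ as part of the input, since $k=\beta+2$ grows with the source instance in Theorem~\ref{thm:np-hard-alpha3}, is a correct clarification that the paper's proof leaves implicit.
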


\begin{proof}
Given an instance of $k$-MPFG, use as candidate set all $k$-mers that occur in the input pangenome. Trigger words that do not occur can never affect the graph and may be omitted. Thus an algorithm for $k$-MPFGrt would solve $k$-MPFG with the same optimum.
\end{proof}

\section{Relationship between prefix-free and de Bruijn graphs}
\label{sec:dbg}

We next relate prefix-free graphs to node-centric de Bruijn graphs. For background on de Bruijn graphs in sequence analysis, see Compeau and Pevzner~\citep{compeau2015bioinformatics}.

\begin{definition}[de Bruijn graph ($DBG_k$)]
\label{def:deBruijngraph}
Given a pangenome and an integer $k$, the node-centric de Bruijn graph $DBG_k=(V,E)$ has one vertex for every distinct $k$-mer occurring in the pangenome. There is an arc $(u,v)\in E$ if the suffix of length $k-1$ of $u$ equals the prefix of length $k-1$ of $v$.
\end{definition}

If the labels of two adjacent vertices overlap in a common $(k-1)$-mer, we use $\diamond$ for the corresponding overlap-aware concatenation.

\begin{definition}[Compacted de Bruijn graph ($cDBG_k$)]
The compacted de Bruijn graph is obtained by repeatedly merging an arc $(u,v)$ for which $u$ has outdegree one and $v$ has indegree one. The merged vertex is labeled by $\ell(u)\diamond\ell(v)$. Equivalently, the vertices of $cDBG_k$ are the maximal non-branching walks of $DBG_k$, represented by their spelled strings.
\end{definition}

For either $DBG_k$ or $cDBG_k$, the input strings induce paths through the graph. We measure their size by Definition~\ref{def:size_pangenomic}, using the spelled string of a vertex as its segment label.

\begin{definition}[Segment base]
For a segment $s$ of a $DBG_k$, $cDBG_k$, or a prefix-free graph with trigger length $k-1$, its \emph{segment base} is the prefix of $s$ obtained by deleting its final $k-1$ symbols. Its base length is therefore $|s|-k+1$, the number of $k$-mer positions in $s$.
\end{definition}

\begin{claim}
\label{claim_dbg_eq_cdbg}
The sum of the segment-base lengths in $DBG_k$ and $cDBG_k$ is the same and equals the number of distinct $k$-mers in the pangenome.
\end{claim}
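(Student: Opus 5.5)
The plan is to prove the two equalities separately: first for $DBG_k$, which is a direct computation, and then for $cDBG_k$, by showing that compaction preserves the total segment-base length.

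For $DBG_k$ there is one vertex per distinct $k$-mer occurring in the pangenome. Each vertex label has length $k$, so its segment base has length $k-(k-1)=1$. The sum of the segment-base lengths is therefore the number of vertices, which is the number of distinct $k$-mers.

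For $cDBG_k$, I would use the description of compaction as a sequence of single merge operations. Suppose an arc $(u,v)$ is merged, where $u$ has outdegree one and $v$ has indegree one. The new label $\ell(u)\diamond\ell(v)$ overlaps in exactly $k-1$ symbols, so its length is $|\ell(u)|+|\ell(v)|-(k-1)$. Its base length is then
\[
|\ell(u)|+|\ell(v)|-2(k-1)=(|\ell(u)|-k+1)+(|\ell(v)|-k+1).
\]
So each merge replaces two vertices by one whose base length is the sum of theirs, and the total base length is invariant. By induction over the finite sequence of merges, starting from $DBG_k$, the total base length of $cDBG_k$ equals that of $DBG_k$.

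The step that needs care is that merges never combine vertices whose labels coincide or overlap in a way that would make the total drop, since segments are counted once each. I would handle this through the equivalent description of $cDBG_k$: its vertices are the maximal non-branching walks, and these partition the vertex set of $DBG_k$. Each $k$-mer therefore lies in exactly one unitig and at exactly one $k$-mer position of it. A cyclic component needs a separate check: a cycle on $c$ vertices compacts to a label of length $c+k-1$, whose base length is still $c$. With these facts, the base length of a unitig, which is its number of $k$-mer positions, counts its $k$-mers without repetition. Summing over the unitigs then gives exactly the number of distinct $k$-mers, which completes the claim.
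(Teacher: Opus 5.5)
Your proof is correct and follows the paper's argument. Each $DBG_k$ vertex has base length one, and merging an arc $(u,v)$ produces a label whose base length is the sum of the two base lengths, so compaction preserves the total. Your extra checks are not in the paper: unitigs partition the $k$-mers, and a cycle on $c$ vertices compacts to a label of base length $c$. They correctly cover the cycle and self-loop case, which the merge-by-merge argument leaves implicit.
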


\begin{proof}
Every $DBG_k$ vertex is a $k$-mer and therefore has base length one. Hence the total base length is exactly the number of distinct $k$-mers. If two compactable segments have base lengths $a$ and $b$, their overlap-aware join has base length $a+b$. Every compaction step therefore preserves the sum of base lengths, proving the claim.
\end{proof}

\begin{claim}
\label{claim_2_cdbg}
Let $G$ be any prefix-free graph whose trigger words have length $k-1$. The sum of the segment-base lengths of $G$ is at least the sum of the segment-base lengths of $cDBG_k$.
\end{claim}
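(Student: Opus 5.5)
By Claim~\ref{claim_dbg_eq_cdbg}, the total segment-base length of $cDBG_k$ equals the number of distinct $k$-mers in the pangenome. It therefore suffices to show that the total segment-base length of $G$ is at least that number. The plan is a charging argument that assigns to every distinct $k$-mer a base position in some segment of $G$, injectively.

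First, fix a segment $s$ of $G$ with trigger length $k-1$. Its base positions $1,\ldots,|s|-k+1$ are exactly the starting positions of the $k$-mers of $s$. So the base length of $s$ counts the $k$-mer occurrences inside $s$, and the total base length of $G$ is $\sum_{s\in\mathcal S}\#\{k\text{-mer occurrences in } s\}$. The sum is over distinct segments only, since a vertex appears once regardless of how often its segment is used.

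Second, I show that every $k$-mer of every input string lies entirely inside some segment of that string's partition. By Definition~\ref{def:seg}, consecutive segments $s_j$ and $s_{j+1}$ overlap in exactly the trigger occurrence, which has length $k-1$. Now take a $k$-mer occurrence at positions $[p,p+k-1]$ of an input string. If it crossed a boundary, it would contain a position at or before the trigger $s[p_j\ldots p_j+k-2]$ and a position after it, or symmetrically on the left. But a window of length $k$ cannot cover a length-$(k-1)$ overlap and extend on both sides. So the window either starts at or after $p_j$, or ends at or before $p_j+k-2$. Hence the covering segments of the string cover every $k$-window, with the window read at a base position of the containing segment. I expect this index bookkeeping, including the first and last segments, to be the only real obstacle; it is pure index arithmetic on the definition.

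Third, for each distinct $k$-mer $x$ in the pangenome, pick one occurrence and a segment $s\in\mathcal S$ containing it, at base position $i$. Map $x$ to the pair $(s,i)$. This map is injective, because the pair $(s,i)$ determines $x=s[i\ldots i+k-1]$. Therefore the number of distinct $k$-mers is at most the number of pairs $(s,i)$, which is the total base length of $G$. Combined with Claim~\ref{claim_dbg_eq_cdbg}, this proves the claim.
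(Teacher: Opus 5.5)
Your proposal is correct and follows the paper's proof. Both arguments reduce via Claim~\ref{claim_dbg_eq_cdbg} to counting distinct $k$-mers, use the $(k-1)$-symbol overlap of adjacent segments to place every $k$-window inside a single segment, and bound the number of distinct $k$-mers by the total base length; your injection $x\mapsto(s,i)$ is the same union bound the paper applies to the segments' $k$-mer sets, and the index step closes cleanly by taking $j$ maximal with $p_j\le p$ (with $p_0=1$), since then $p<p_{j+1}$ puts the window inside $s_{j+1}$.
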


\begin{proof}
Every $k$-mer occurring in an input string is contained in at least one segment of $G$. Indeed, adjacent PFG segments overlap in exactly $k-1$ symbols, so a $k$-mer crossing a parsing boundary is contained in one of the two adjacent segments. A segment $s$ contains at most $|s|-k+1$ distinct $k$-mers, which is exactly its base length. Therefore the union of the $k$-mer sets of all PFG segments has cardinality at most the sum of their base lengths. Since that union contains every distinct input $k$-mer, the PFG base sum is at least the number of distinct $k$-mers. Claim~\ref{claim_dbg_eq_cdbg} identifies this number with the base sum of $cDBG_k$.
\end{proof}

\begin{claim}
\label{claim:cdbg-is-pfg}
For every $k\geq2$, the compacted de Bruijn graph $cDBG_k$ is realizable as a prefix-free graph with trigger length $k-1$.
\end{claim}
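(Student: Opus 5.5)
We take as trigger set $W_{k-1}$ the branching $(k-1)$-overlaps, and show that the prefix-free partition reproduces exactly the unitig decomposition of $DBG_k$. A $(k-1)$-mer $w$ occurring in the pangenome is \emph{branching} if, in $DBG_k$, it is the $(k-1)$-suffix of at least two distinct $k$-mers or the $(k-1)$-prefix of at least two distinct $k$-mers. Equivalently, $w$ is branching if there is an arc $(u,v)$ with $w$ the suffix of $u$ and prefix of $v$ such that $u$ has outdegree greater than one or $v$ has indegree greater than one. Note that in a node-centric DBG all $k$-mers with suffix $w$ connect to all $k$-mers with prefix $w$, so the arcs through a given overlap $w$ form a complete bipartite set.

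The key step is a local characterization of the compaction. An arc $(u,v)$ through overlap $w$ is merged in $cDBG_k$ iff $u$ has outdegree one and $v$ has indegree one. By the complete-bipartite remark, this holds iff exactly one $k$-mer ends with $w$ and exactly one begins with $w$, that is, iff $w$ is non-branching. Hence the unitig boundaries, i.e. the places where consecutive $k$-mers of an input string lie in different $cDBG_k$ vertices, are exactly the positions where the shared $(k-1)$-overlap is in $W_{k-1}$.

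Next we walk along an input string $s$. Consecutive $k$-mers $s[p-1..p+k-2]$ and $s[p..p+k-1]$ overlap in $s[p..p+k-2]$, and they lie in different unitigs exactly when this overlap is a trigger. An internal trigger occurrence at position $p$ (with $1<p\le m-k+1$) therefore corresponds precisely to a unitig break. Consequently the segments of Definition~\ref{def:seg} are the maximal substrings of $s$ whose consecutive $k$-mers are linked by non-branching overlaps, and they overlap by $k-1$ symbols, matching $\diamond$. Each such segment is a subwalk of a unitig. We must still show it is the \emph{whole} unitig, and this is where we expect the main care to be needed. At an interior point of the segment the overlap is non-branching, so the next $k$-mer is forced. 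At the segment ends, either the overlap is a trigger, so the unitig genuinely ends there, or the string ends. In the latter case the end $k$-mer might continue in the unitig, for example when the last $k$-mer has a unique successor whose predecessor is unique, because the continuation is seen only in another string.

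We would handle this end case first. Under the trigger-occurrence convention (the ignored first and last positions, equivalently sentinels), the string endpoints are treated as branching. Concretely, we adopt the standard convention that $cDBG_k$ is built on the sentinelled strings, or equivalently that string starts and ends force unitig breaks. Under that convention the unitig ends agree with the segment ends. We would state this convention explicitly and then verify the remaining two points. First, distinct segments correspond bijectively to distinct unitigs, since each unitig is spelled by a unique string. Second, PFG arcs and paths coincide with the $cDBG_k$ arcs and induced paths. Combining these gives that the induced prefix-free graph is isomorphic to $cDBG_k$ as a labeled graph with paths, with equal size.
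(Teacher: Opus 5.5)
Your argument follows the paper's proof. You use the same trigger set of branching $(k-1)$-overlaps and the same observation that an arc through $w$ is compacted exactly when $|L(w)|=|R(w)|=1$; your complete-bipartite remark is exactly why this depends on $w$ alone. You also make the same appeal to sentinels at the string ends. Your treatment of the ends is more careful than the paper's one-line remark. One precision: the overlaps that become branching because of the sentinels must actually be put into $W$, not only used to declare extra unitig breaks in $cDBG_k$. For \texttt{ACG}, \texttt{ACGT} with $k=3$, the overlap \texttt{CG} must become a trigger. Merely forcing a break after \texttt{ACG} in $cDBG_k$ leaves \texttt{ACGT} unparsed, so your ``or equivalently'' holds only in this stronger sense.

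The step that fails is the last one: PFG arcs do not in general coincide with $cDBG_k$ arcs, so the two graphs are not isomorphic as labeled graphs. By Definition~\ref{def:deBruijngraph}, and by your own complete-bipartite remark, every $k$-mer ending in $w$ has an arc to every $k$-mer beginning with $w$. Definition~\ref{def:prefix-free}, by contrast, only creates arcs between segments that are consecutive in some input string.

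For a counterexample, take $k=3$ and the strings \texttt{ACGA}, \texttt{TCGT}. The overlap \texttt{CG} is branching, and both graphs have the vertices \texttt{ACG}, \texttt{CGA}, \texttt{TCG}, \texttt{CGT}. However, $cDBG_3$ has all four arcs from $\{\texttt{ACG},\texttt{TCG}\}$ to $\{\texttt{CGA},\texttt{CGT}\}$. The induced PFG has only $\texttt{ACG}\to\texttt{CGA}$ and $\texttt{TCG}\to\texttt{CGT}$.

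What your argument does establish, and what the paper actually claims, is weaker. The two graphs have the same segment labels and the same input paths, and the PFG arc set is the set of $cDBG_k$ arcs traversed by some input path. Since $|G|$ in Definition~\ref{def:size_pangenomic} does not count arcs, this already gives equal size, which is all Theorem~\ref{thm:size-hierarchy} uses. Replace the isomorphism claim with this statement.
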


\begin{proof}
For a $(k-1)$-mer $x$, let $L(x)$ be the set of $k$-mers having suffix $x$ and let $R(x)$ be the set of $k$-mers having prefix $x$. An arc whose overlap is $x$ is non-branching exactly when $|L(x)|=|R(x)|=1$. Let $W$ contain precisely the observed $(k-1)$-mers $x$ for which $|L(x)|\neq1$ or $|R(x)|\neq1$.

A context-triggered partition using $W$ cuts an input path exactly at branching overlaps and does not cut at non-branching overlaps. Consequently, each segment spells a maximal non-branching walk of $DBG_k$, and every such walk is represented by one segment. These are exactly the vertices of $cDBG_k$. With the usual terminal sentinel convention, the first and last maximal walks are handled in the same way. Thus the induced PFG and $cDBG_k$ have the same segment labels and input paths.
\end{proof}

\begin{claim}
\label{claim:MPG>sqrtn}
If the total length of the pangenome is $n$, then every pangenomic graph has size at least $2\sqrt n$.
\end{claim}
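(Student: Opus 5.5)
We argue by a counting trade-off between the dictionary part and the path part of the size in Definition~\ref{def:size_pangenomic}. Let $G=(V,E,P)$ be any pangenomic graph for a pangenome of total length $n$, and write $D=\sum_{v\in V}|\ell(v)|$ for the dictionary contribution and $T=\sum_{p\in P}|p|$ for the path contribution, so $|G|=D+T$. Let $L=\max_{v\in V}|\ell(v)|$ be the length of the longest segment label.

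First we bound $T$ from below. Each input-string occurrence is reconstructed from the labels along its path. Whatever the joining operation, it is at most plain concatenation: overlap-aware joining such as $\diamond$ only removes shared symbols. Hence each path vertex contributes at most $L$ symbols to the spelled string. Summing over all input occurrences, whose total length is $n$, gives $n\leq T\cdot L$, that is, $T\geq n/L$.

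Next we bound $D$ from below. The vertex carrying the longest label alone contributes $L$, so $D\geq L$.

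Combining the two bounds, $|G|=D+T\geq L+n/L\geq 2\sqrt{n}$ by the AM--GM inequality, since $L>0$ whenever $n>0$. The case $n=0$ is trivial.

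The only delicate point is the step $n\leq TL$, namely making precise that the representation-specific joining never produces more than $\sum|\ell(v_i)|$ symbols along a path. For concatenation this is equality. For $\diamond$ with overlap $k$, a path of $t$ vertices spells $\sum_i|\ell(v_i)|-(t-1)k$ symbols, which is again at most $tL$. We will state this assumption explicitly as part of what ``joining operation'' means in Definition~\ref{def:pangenomic_graph}. For the graph classes considered in the paper (PFG, $DBG_k$, $cDBG_k$), it holds directly.
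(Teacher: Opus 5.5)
Your proof is correct and follows essentially the same route as the paper. The paper writes $n=\sum_v\operatorname{occ}(v)|\ell(v)|\le\bigl(\sum_v\operatorname{occ}(v)\bigr)\bigl(\sum_v|\ell(v)|\bigr)=AB$ and applies AM--GM to $A+B$. Your $n\le TL$ with $L\le D$ is the same bound, using the longest label in place of the whole dictionary sum. Your remark that overlap-aware joining gives only $n\le\sum_v\operatorname{occ}(v)|\ell(v)|$, not the equality the paper writes, is also correct and treats that step more carefully.
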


\begin{proof}
Let
\[
    A=\sum_{v\in V}|\ell(v)|,
    \qquad
    B=\sum_{p\in P}|p|.
\]
For each vertex $v$, let $\operatorname{occ}(v)$ be its total number of occurrences across all paths. Then $B=\sum_v\operatorname{occ}(v)$ and
\[
    n=\sum_v\operatorname{occ}(v)|\ell(v)|
    \leq
    \left(\sum_v\operatorname{occ}(v)\right)
    \left(\sum_v|\ell(v)|\right)
    =AB.
\]
By the arithmetic--geometric mean inequality,
\[
    |G|=A+B\geq2\sqrt{AB}\geq2\sqrt n.
\]
\end{proof}

\begin{theorem}[Size hierarchy]
\label{thm:size-hierarchy}
For every pangenome of total length $n$ and every $k\geq2$,
\[
    2\sqrt n
    \leq |MPG|
    \leq |MPFG|
    \leq |cDBG_k|
    \leq |DBG_k|.
\]
If the pangenome contains $r$ nonempty input strings, then additionally
\[
    |MPFG|\leq n+r,
    \qquad
    |DBG_k|\leq(k+1)n.
\]
\end{theorem}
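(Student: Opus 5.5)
The proof will establish the chain one inequality at a time, from left to right, and then the two auxiliary bounds. The leftmost inequality $2\sqrt n\leq|MPG|$ is Claim~\ref{claim:MPG>sqrtn} applied to an optimal pangenomic graph. For $|MPG|\leq|MPFG|$ we only need to note that every prefix-free graph is a pangenomic graph in the sense of Definition~\ref{def:pangenomic_graph}, with the joining operation $\diamond$. The size functional is the same, so the minimum over the larger class is at most the minimum over prefix-free graphs. Here $|MPFG|$ is read as the minimum over all trigger lengths, or at least over trigger length $k-1$, which is what the next step needs.

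For $|MPFG|\leq|cDBG_k|$ we invoke Claim~\ref{claim:cdbg-is-pfg}. For $k\geq2$ the trigger set $W$ of branching $(k-1)$-mers induces a PFG whose segment labels and input paths coincide with those of $cDBG_k$. Both objects are measured by Definition~\ref{def:size_pangenomic}, so their sizes are equal, and the minimum is at most this value.

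The step $|cDBG_k|\leq|DBG_k|$ is where some care is needed, and I expect it to be the main obstacle. I would split the size into the label part $A$ and the path part $B$. By Claim~\ref{claim_dbg_eq_cdbg}, the base sums agree, and
\[
A=\sum_v(\text{base}(v)+k-1)=\text{base sum}+(k-1)|V|.
\]
Compaction only decreases $|V|$, so $A$ does not increase. For the paths, each merge of an arc $(u,v)$ with $\mathrm{outdeg}(u)=\mathrm{indeg}(v)=1$ replaces every traversal of $u$ followed by $v$ by one vertex. A path cannot contain $v$ without $u$ immediately before it, except at the start of a path; likewise $u$ cannot appear without $v$ after it, except at the end of a path. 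Handling these boundary cases (with the sentinel convention, or by noting that path lengths never increase under merging) shows that $B$ is non-increasing. Hence each compaction step does not increase the size.

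For the extra bounds, the empty trigger set gives one segment per distinct input string. This yields a label total of at most $n$ and path total $r$, so $|MPFG|\leq n+r$. For $DBG_k$, there are at most $n$ distinct $k$-mers, each of label length $k$, contributing at most $kn$. Each input path has at most $|s|-k+1\leq|s|$ vertices, contributing at most $n$. Together this gives $|DBG_k|\leq(k+1)n$.
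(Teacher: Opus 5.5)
Your proposal is correct and follows the paper's proof essentially step for step: Claim~\ref{claim:MPG>sqrtn} for the lower bound, the observation that prefix-free graphs are pangenomic graphs, Claim~\ref{claim:cdbg-is-pfg} for $|MPFG|\leq|cDBG_k|$, monotonicity of size under compaction, and the empty-trigger-set and $k$-mer-counting arguments for the two additional bounds. Writing the label part as base sum plus $(k-1)|V|$ is a more explicit form of the paper's remark that each overlap-aware join shortens the total label length. Likewise, your handling of the path part matches the paper's statement that compaction only weakly shortens affected paths.
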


\begin{proof}
The first inequality is Claim~\ref{claim:MPG>sqrtn}. Every prefix-free graph is a pangenomic graph, so $|MPG|\leq|MPFG|$. Claim~\ref{claim:cdbg-is-pfg} gives $|MPFG|\leq|cDBG_k|$. Every compaction step replaces two overlapping segment labels by their overlap-aware join and weakly shortens every affected input path, so it cannot increase graph size; hence $|cDBG_k|\leq|DBG_k|$.

For the first additional bound, choose the empty trigger set. Each input-string occurrence then contributes one path vertex, while the total length of the distinct full-string segments is at most $n$. Thus $|MPFG|\leq n+r$. Finally, if $q$ is the number of distinct $k$-mers, then the segment contribution of $DBG_k$ is $kq\leq kn$, while the total path length is at most $n$, giving $|DBG_k|\leq(k+1)n$.
\end{proof}

\section{Exact construction for a small trigger universe}

The hardness results concern the general case in which the number of possible trigger words can grow with the input. When the trigger universe is small, an exact algorithm follows immediately.

\begin{observation}
\label{obs:fpt}
Let $q$ be the number of distinct length-$k$ words that occur internally in the pangenome and can therefore affect a context-triggered partition. An optimal prefix-free graph can be found in $O(2^q n)$ time and polynomial space, where $n$ is the total pangenome length.
\end{observation}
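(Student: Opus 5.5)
The plan is exhaustive enumeration over the trigger universe, together with a linear-time evaluation of the graph size for each fixed trigger set. First, in a single left-to-right pass (or with a hash table over rolling fingerprints), I collect the set $U$ of distinct $k$-mers occurring at strictly internal positions, i.e.\ starting at positions $1<p\leq m-k$ of some input string of length $m$. By Definition~\ref{def:seg}, a word outside $U$ never yields a partition boundary. Hence for any $W_k\subseteq\Sigma^k$ the induced prefix-free graph coincides with that of $W_k\cap U$, and it suffices to minimize over the $2^q$ subsets of $U$, where $q=|U|$. I also precompute, for each position, the identifier of the $k$-mer starting there, or a marker when it is not in $U$. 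This takes $O(n)$ expected time with hashing, or $O(n)$ deterministically via a suffix array or Karp--Rabin names.

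For each subset $W\subseteq U$, which I encode as a $q$-bit mask, I evaluate $|G(W)|$. Scanning every input string, a position $p$ is a boundary exactly when its precomputed identifier has its bit set in the mask. This produces the partition $s_1\diamond\cdots\diamond s_{t+1}$ in $O(m)$ time. The path contribution $\sum_p|p|$ is just the sum of the segment counts. The dictionary contribution is the total length of the \emph{distinct} segments.

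This deduplication is the step I expect to be the main obstacle. Naively hashing segment strings costs time proportional to their total length. That length is at most $(k+1)$ times the input length by the bound already used in the proof of Lemma~\ref{lem:synchronized-code}, so it is $O(n)$ for fixed $k$. To stay $O(n)$ independently of $k$, I would instead precompute a suffix array with LCP information, or Karp--Rabin fingerprints over the concatenated input. Then each segment, given by its start and end positions, gets an $O(1)$ fingerprint (or an LCP-based equality test), and the distinct segments are identified by inserting these fingerprints into a hash table in $O(1)$ expected time per segment. Alternatively, since only $O(n)$ segments arise, radix-sorting the pairs (rank of the starting suffix, length) and comparing neighbours via range-minimum LCP queries gives a deterministic $O(n)$ evaluation. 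Summing the lengths of the distinct segments and adding the path total gives $|G(W)|$ in $O(n)$ time.

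Keeping the minimum over all masks yields the optimum in $O(2^q n)$ total time. The working memory is only the $O(n)$ preprocessing, one per-mask hash table that is reused across iterations, and the current best mask. This is polynomial space, since the subsets are enumerated by incrementing a counter rather than being stored. The correctness of the restriction to $U$ comes directly from Definition~\ref{def:seg}. The reduction to strictly internal occurrences also agrees with the convention used in Corollary~\ref{cor:restricted-hard}, so the same procedure solves $k$-MPFGrt by intersecting $U$ with the candidate set $C_k$.
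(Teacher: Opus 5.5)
Your proof is correct and follows the paper's argument: enumerate all $2^q$ subsets of the internally occurring $k$-mers, evaluate each induced partition with one $O(n)$ scan that uses hashing to deduplicate segment labels, and keep the best subset. Your extra care with deduplication makes explicit a point that the paper's one-line appeal to hashing leaves implicit: naive label hashing costs $O(kn)$, and you replace it with $O(1)$ fingerprints or LCP-based equality tests.
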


\begin{proof}
Enumerate all $2^q$ subsets of the candidate trigger words. For each subset, scan the pangenome once, construct the induced partition, and accumulate the distinct segment lengths and path lengths. With hashing of segment labels, this evaluation takes $O(n)$ expected time per subset. Keeping the best solution gives the stated bound.
\end{proof}

This observation gives a fixed-parameter algorithm parameterized by the number of distinct candidate $k$-mers. In highly repetitive pangenomes, this parameter can be much smaller than the total input length; related repetitiveness measures have been studied in~\citep{2021repmeasure,2021compress}.

\section{Discussion}

The results above separate two aspects of prefix-free graph construction. Once a trigger set is fixed, the graph can be built by a linear scan of the input, whereas selecting an optimal trigger set is NP-hard. The de Bruijn relationship provides efficiently computable structural bounds: the compacted de Bruijn graph is itself a feasible PFG and hence gives an upper bound on the optimum, while its segment-base sum gives a lower bound on the segment contribution of every PFG with the corresponding trigger length. When these bounds are close, a compacted de Bruijn graph is therefore a principled approximation to the best prefix-free representation at that scale.

The fixed-parameter algorithm in Observation~\ref{obs:fpt} suggests a complementary direction for exact methods: instead of parameterizing by the total pangenome length, one can parameterize by a measure of sequence diversity, such as the number of distinct candidate $k$-mers. Developing sharper FPT or approximation algorithms under pangenome-specific repetitiveness measures is a natural direction for future work.

\bibliographystyle{abbrv}
\bibliography{bibliography}

@article{compeau2015bioinformatics,
  title={Bioinformatics algorithms: an active learning approach},
  author={Compeau, Phillip and Pevzner, Pave},
  journal={(No Title)},
  year={2015}
}

@article{1983_theory_of_np,
    title = {Michael R. $\Pi$Garey and David S. Johnson. Computers and intractability. A guide to the theory of NP-completeness. WH Freeman and Company, San Francisco1979, x+ 338 pp.},
    author = {Lewis, Harry R},
    journal = {The Journal of Symbolic Logic},
    volume = {48},
    number = {2},
    pages = {498--500},
    year = {1983},
    publisher = {Cambridge University Press},
}

@article{2019bigBWT,
    title = {Prefix-free parsing for building big BWTs},
    author = {Boucher, Christina and Gagie, Travis and Kuhnle, Alan and Langmead, Ben and Manzini, Giovanni and Mun, Taher},
    journal = {Algorithms for Molecular Biology},
    volume = {14},
    pages = {1--15},
    year = {2019},
    publisher = {Springer},
}

@inproceedings{2023recursive_bigBWT,
    title = {Recursive prefix-free parsing for building big BWTs},
    author = {Oliva, Marco and Gagie, Travis and Boucher, Christina},
    booktitle = {2023 data compression conference (DCC)},
    pages = {62--70},
    year = {2023},
    organization = {IEEE},
}

@inproceedings{2023pfp_lz77,
    title = {LZ77 via prefix-free parsing},
    author = {Hong, Aaron and Rossi, Massimiliano and Boucher, Christina},
    booktitle = {2023 Proceedings of the Symposium on Algorithm Engineering and Experiments (ALENEX)},
    pages = {123--134},
    year = {2023},
    organization = {SIAM},
}

@article{2023pfp_fm_index,
    title = {Acceleration of FM-Index Queries Through Prefix-Free Parsing},
    author = {Hong, Aaron and Oliva, Marco and K{\"o}ppl, Dominik and Bannai, Hideo and Boucher, Christina and Gagie, Travis},
    journal = {arXiv preprint arXiv:2305.05893},
    year = {2023},
}

@article{2024recursive_pfp_pangenome,
    title = {Building a pangenome alignment index via recursive prefix-free parsing},
    author = {Ferro, Eddie and Oliva, Marco and Gagie, Travis and Boucher, Christina},
    journal = {iScience},
    volume = {27},
    number = {10},
    year = {2024},
    publisher = {Elsevier},
}

@inproceedings{2021pfp_cst,
    title = {PFP Compressed Suffix Trees},
    author = {Boucher, Christina and Cvacho, Ond{\v{r}}ej and Gagie, Travis and Holub, Jan and Manzini, Giovanni and Navarro, Gonzalo and Rossi, Massimiliano},
    booktitle = {2021 Proceedings of the Workshop on Algorithm Engineering and Experiments (ALENEX)},
    pages = {60--72},
    year = {2021},
    organization = {SIAM},
}

@inproceedings{2019rpair,
    title = {Rpair: rescaling RePair with rsync},
    author = {Gagie, Travis and I, Tomohiro and Manzini, Giovanni and Navarro, Gonzalo and Sakamoto, Hiroshi and Takabatake, Yoshimasa},
    booktitle = {International Symposium on String Processing and Information Retrieval},
    pages = {35--44},
    year = {2019},
    organization = {Springer},
}

@inproceedings{2022pfp_wg,
    title = {Prefix-Free Parsing for Building Large Tunnelled Wheeler Graphs},
    author = {Goga, Adri{\'a}n and Bal{\'a}{\v{z}}, Andrej},
    booktitle = {22nd International Workshop on Algorithms in Bioinformatics (WABI 2022)},
    year = {2022},
    organization = {Schloss Dagstuhl-Leibniz-Zentrum f{\"u}r Informatik},
}

@inproceedings{2023pfg,
    title = {Prefix-free graphs and suffix array construction in sublinear space},
    author = {Bal\'{a}\v{z}, Andrej and Petescia, Alessia},
    booktitle = {Workshop on Bioinformatics and Computational Biology WBCB 2023},
    year = {2023},
}

@article{2024recursive_repair,
    title = {Recursive RePair: Increasing the Scalability of RePair by Decreasing Memory Usage},
    author = {Kim, Justin and Varki, Rahul and Oliva, Marco and Boucher, Christina},
    journal = {bioRxiv},
    pages = {2024--07},
    year = {2024},
    publisher = {Cold Spring Harbor Laboratory},
}

@ARTICLE {2021repmeasure,
        TITLE = "Indexing Highly Repetitive String Collections, 
		 Part {I}: Repetitiveness Measures",
        AUTHOR = "G. Navarro",
        JOURNAL = {ACM Computing Surveys},
        YEAR = 2021,
        VOLUME = 54,
        NUMBER = 2,
        PAGES = "article 29"
}

@ARTICLE {2021compress,
        TITLE = "Indexing Highly Repetitive String Collections, 
		 Part {II}: Compressed Indexes",
        AUTHOR = "G. Navarro",
        JOURNAL = {ACM Computing Surveys},
        YEAR = 2021,
	VOLUME = 54,
	NUMBER = 2,
	PAGES = "article 26"
}

@article{luca2025measuring,
  title={Measuring genomic data with prefix-free parsing},
  author={Luc{\`a}, Simone and Masillo, Francesco and Lipt{\'a}k, Zsuzsanna},
  journal={Computational Biology and Chemistry},
  pages={108870},
  year={2025},
  publisher={Elsevier}
}

@article{computational2018computational,
  title={Computational pan-genomics: status, promises and challenges},
  journal={Briefings in bioinformatics},
  volume={19},
  number={1},
  pages={118--135},
  year={2018},
  publisher={Oxford University Press}
}

@article{tettelin2005pangenome,
  title={Genome analysis of multiple pathogenic isolates of Streptococcus agalactiae: implications for the microbial “pan-genome”},
  author={Tettelin, Herv{\'e} and Masignani, Vega and Cieslewicz, Michael J and Donati, Claudio and Medini, Duccio and Ward, Naomi L and Angiuoli, Samuel V and Crabtree, Jonathan and Jones, Amanda L and Durkin, A Scott and others},
  journal={Proceedings of the National Academy of Sciences},
  volume={102},
  number={39},
  pages={13950--13955},
  year={2005},
  publisher={National Academy of Sciences}
}

@book{garrison2018graphical,
  title={Graphical pangenomics},
  author={Garrison, Erik Peter},
  year={2018},
  publisher={University of Cambridge (United Kingdom)}
}

@article{equi2023graphs,
  title={Graphs cannot be indexed in polynomial time for sub-quadratic time string matching, unless SETH fails},
  author={Equi, Massimo and M{\"a}kinen, Veli and Tomescu, Alexandru I},
  journal={Theoretical Computer Science},
  volume={975},
  pages={114128},
  year={2023},
  publisher={Elsevier}
}

@inproceedings{iliopoulos2017efficient,
  title         = {Efficient pattern matching in elastic-degenerate texts},
  author        = {Iliopoulos, Costas S and Kundu, Ritu and Pissis, Solon P},
  booktitle     = {International Conference on Language and Automata Theory and Applications},
  pages         = {131--142},
  year          = {2017},
  organization  = {Springer},
}

@inproceedings{bernardini2017pattern,
  title={Pattern matching on elastic-degenerate text with errors},
  author={Bernardini, Giulia and Pisanti, Nadia and Pissis, Solon P and Rosone, Giovanna},
  booktitle={International Symposium on String Processing and Information Retrieval},
  pages={74--90},
  year={2017},
  organization={Springer}
}

@article{makinen2020linear,
  title         = {Linear time construction of indexable founder block graphs},
  author        = {M{\"a}kinen, Veli and Cazaux, Bastien and Equi, Massimo and Norri, Tuukka and Tomescu, Alexandru I},
  journal       = {arXiv preprint arXiv:2005.09342},
  year          = {2020},
}

@article{2011paten1,
  title         = {Cactus graphs for genome comparisons},
  author        = {Paten, Benedict and Diekhans, Mark and Earl, Dent and John, John St and Ma, Jian and Suh, Bernard and Haussler, David},
  journal       = {Journal of Computational Biology},
  volume        = {18},
  number        = {3},
  pages         = {469--481},
  year          = {2011},
  publisher     = {Mary Ann Liebert, Inc. 140 Huguenot Street, 3rd Floor New Rochelle, NY 10801 USA},
}

@article{2011paten2,
  title         = {Cactus: Algorithms for genome multiple sequence alignment},
  author        = {Paten, Benedict and Earl, Dent and Nguyen, Ngan and Diekhans, Mark and Zerbino, Daniel and Haussler, David},
  journal       = {Genome research},
  volume        = {21},
  number        = {9},
  pages         = {1512--1528},
  year          = {2011},
  publisher     = {Cold Spring Harbor Lab},
}

@article{gagie2017wheeler,
  title={Wheeler graphs: A framework for BWT-based data structures},
  author={Gagie, Travis and Manzini, Giovanni and Sir{\'e}n, Jouni},
  journal={Theoretical computer science},
  volume={698},
  pages={67--78},
  year={2017},
  publisher={Elsevier}
}

@inproceedings{cotumaccio2021indexing,
  title={On indexing and compressing finite automata},
  author={Cotumaccio, Nicola and Prezza, Nicola},
  booktitle={Proceedings of the 2021 ACM-SIAM Symposium on Discrete Algorithms (SODA)},
  pages={2585--2599},
  year={2021},
  organization={SIAM}
}

@article{kornblum2006identifying,
  title         = {Identifying almost identical files using context triggered piecewise hashing},
  author        = {Kornblum, Jesse},
  journal       = {Digital investigation},
  volume        = {3},
  pages         = {91--97},
  year          = {2006},
  publisher     = {Elsevier},
}

\end{document}